\newif\ifarxiv
\def\ps@pprintTitle{%
   \let\@oddhead\@empty
   \let\@evenhead\@empty
   \def\@oddfoot{\reset@font\hfil\thepage\hfil}
   \let\@evenfoot\@oddfoot
}
\newcommand{\MS}[1]{\ensuremath{\Delta}}
\newtheorem{theorem}{Theorem}[section]
\newtheorem{lemma}[theorem]{Lemma}
\newtheorem{proposition}[theorem]{Proposition}
\newenvironment{proof}{\noindent{\bf Proof.\/}}{}
\newcommand{\RMQ}{{\textsf{RMQ}}}
\newcommand{\LCA}{{\textsf{LCA}}}
\journal{Journal of \LaTeX\ Templates}
\begin{document}

\begin{frontmatter}

\title{Compressed Range Minimum Queries\tnoteref{mytitlenote}}
\tnotetext[mytitlenote]{Preliminary version of this paper appeared in SPIRE 2018. The work is supported in part by Israel Science Foundation grant 592/17.}

\author[Pawel]{Pawe\l{}' Gawrychowski}
\address[Pawel]{University of Wroc\l{}aw}
\ead{gawry@cs.uni.wroc.pl}

\author[Seungbum]{Seungbum Jo}
\address[Seungbum]{University of Haifa}
\ead{seungbum.jo@uni-siegen.de}

\author[Shay]{Shay Mozes}
\address[Shay]{Interdisciplinary Center Herzliya}
\ead{smozes@idc.ac.il}

\author[Seungbum]{Oren Weimann}
\ead{oren@cs.haifa.ac.il}

\begin{abstract}
Given a string $S$ of $n$ integers in $[0,\sigma)$, a range minimum query $\RMQ(i, j)$ asks for the index of the smallest integer in $S[i \dots j]$. It is well known that the problem can be solved with a succinct data structure of size $2n + o(n)$ and constant query-time. 
In this paper we show how to preprocess $S$ into a {\em compressed representation} that allows fast range minimum queries. This allows for {\em sublinear} size data structures with logarithmic query time.
The most natural approach is to use string compression and construct a data structure for answering range minimum queries directly on the compressed string. We investigate this approach in the context of grammar compression. We then consider an alternative approach. 
%Even if $S$ is not compressible, its Cartesian tree necessarily is. Therefore, 
Instead of compressing $S$ using string compression, we compress the Cartesian tree of $S$ using tree compression. 
We show that this approach can be exponentially better than the former, is never worse by more than an $O(\sigma)$ factor (i.e. for constant alphabets it is never asymptotically worse), and can in fact be worse by an $\Omega(\sigma)$ factor.  
\end{abstract}

\begin{keyword}
RMQ \sep grammar compression \sep SLP \sep tree compression \sep Cartesian tree.
\end{keyword}

\end{frontmatter}

\ifarxiv
\else
\begin{center}
\em{Dedicated to our friend Danny Breslauer}
\end{center}
\fi

\section{Introduction}

Given a string $S$ of $n$ integers in $[0,\sigma)$, a range minimum query $\RMQ(i, j)$ returns the index of the smallest integer in $S[i \dots j]$. A range minimum data structure consists of a preprocessing algorithm and a query algorithm. The preprocessing algorithm takes as input the string $S$, and constructs the data structure, whereas the query algorithm takes as input the indices $i,j$ and, by accessing the data structure, returns $\RMQ(i,j)$. The range minimum problem is a fundamental data structure problem that has been extensively studied, both in theory and in practice (see e.g.~\cite{fh-11} and references therein).

Range minimum data structures fall into two categories. {\em Systematic} data structures store the input string $S$ in plain form, whereas {\em non-systematic} data structures do not. A significant amount of attention has been devoted to devising RMQ data structures that answer queries in constant time and require as little space as possible. There are succinct systematic structures that answer queries in constant time and require fewer than $2n$ bits in addition to the $n\log \sigma$ bits required to represent $S$~\cite{fh-11}. Similarly, there are succinct non-systematic structures that answer queries in constant time, and require $2n + o(n)$ bits~\cite{fh-11, DavoodiRS12}.

The {\em Cartesian tree} $\mathcal{C}$ of $S$ is a rooted ordered binary tree with $n$ nodes. It is defined recursively. The Cartesian tree of an empty string is an empty tree. Let $i$ be the index of the smallest element of $S$ (if the smallest element appears multiple times in $S$, let $i$ be the first such appearance). The Cartesian tree of $S$ is composed of a root node whose left subtree is the Cartesian tree of $S[1,i-1]$, and whose right subtree is the Cartesian tree of $S[i+1,n]$. See Figure~\ref{figure:cartesian}.
By definition, the character $S[i]$ corresponds 
to the $i$'th node in an inorder traversal of $\mathcal{C}$ (we will refer to this node as node $i$). 
Furthermore, for any nodes $i$ and $j$ in $\mathcal{C}$, their lowest common ancestor $\LCA(i, j)$ in $\mathcal{C}$ corresponds to $\RMQ(i, j)$ in $S$. It follows that the Cartesian tree of $S$ completely characterizes $S$ in terms of range minimum queries. Indeed, two strings return the same answers for all possible range minimum queries if and only if their Cartesian trees are identical. This well known property has been used by many RMQ data structures including the succinct structures mentioned above. Since there are $2^{2n -O(\log n)}$ distinct rooted binary trees with $n$ nodes, there is an information theoretic lower bound of $2n -O(\log n)$ bits for RMQ data structures. In this sense, the above mentioned $2n + o(n)$ bits data structures~\cite{fh-11,DavoodiRS12} are nearly optimal.

\subsection{Our results and techniques}

In this work we present RMQ data structures in the word-RAM model (without using any bit tricks). The size (in words) of our data structures can be {\em sublinear} in the size of the input string and the query time is $O(\log n)$. This is achieved by using compression techniques, and developing data structures that can answer RMQ/LCA queries directly on the compressed objects. 
Since we aim for sublinear size data structures, we focus on non-systematic data structures.
We consider two different approaches to achieve this goal. The first approach is to use string compression to compress $S$, and devise an RMQ data structure on the compressed representation. This approach has also been suggested in~\cite[Section 7.1]{A} in the context of compressed suffix arrays and the LCP array. See also ~\cite[Theorem 2]{DavoodiRS12}, ~\cite[Theorem 4.1]{fh-11}, and~\cite{B} for steps in this direction.
The other approach is to use tree compression to compress the Cartesian tree $\mathcal C$, and devise an LCA data structure on the compressed representation. 
To the best of our knowledge, this is the first time such approach has been suggested.
Note that these two approaches are not equivalent. 
For example, consider a sorted sequence of an arbitrary subset of $n$ different integers from $[1,2n]$. As a string this sorted sequence is not compressible, but its Cartesian tree is an (unlabeled) path, which is highly compressible. In a nutshell, we show that the tree compression approach can exponentially outperform the string compression approach. Furthermore, it is never worse than the string compression approach by more than an $O(\sigma)$ factor, and this $O(\sigma)$ factor is unavoidable. We next elaborate on these two approaches.

\paragraph{\bf Using string compression}
In Section~\ref{sec:slp}, we show how to answer range minimum queries on a {\em grammar compression} of the input string $S$. A grammar compression is a context-free grammar that generates only $S$. The grammar is represented as a {\em straight line program} (SLP) $\mathcal S$. I.e., the 
right-hand side of each rule in $\mathcal{S}$ 
either consists of the concatenations of two non-terminals or of a single terminal symbol.
The size $|\mathcal S|$ of the SLP $\mathcal{S}$ is defined as 
the number of rules in $\mathcal{S}$. Ideally, $|\mathcal S| \ll |S|$. 
Computing the smallest possible SLP is NP-hard~\cite{Charikar-slp}, but there are many theoretically and practically efficient compression schemes for constructing $\mathcal{S}$~\cite{Charikar-slp, DBLP:journals/jda/GotoBIT13, DBLP:conf/esa/TakabatakeIS17} that reasonably approximate the optimal SLP. In particular, Rytter~\cite{Rytter03} showed an SLP $\mathcal{S}$ of depth $\log n$ (the depth of an SLP is the depth of its parse tree) whose size is larger than the optimal SLP by at most a multiplicative $\log n$ factor. Very recently, Ganardi et al.~\cite{GanardiJL19} showed that any SLP can be turned (with no asymptotic overhead) into an equivalent SLP that is of depth $\log n$.   

In~\cite{A}, it was shown how to support range minimum queries on $S$ with a data structure of size $O(|\mathcal S|)$ in time proportional to the depth of the SLP $\mathcal S$. Bille et al.~\cite{random-access} designed a data structure of size $O(|\mathcal S|)$  that supports random-access to $S$ (i.e. retrieve the $i$'th symbol in $S$) in $O(\log n)$ time (i.e. regardless of the depth of the SLP $\mathcal S$). We show how to simply augment their data structure within the same $O(|\mathcal S|)$ size bound to answer range minimum queries in $O(\log n)$ time (i.e. how to avoid the logarithmic overhead incurred by using the solution of~\cite{A} on Rytter's SLP).

\begin{theorem}
\label{thm:slprmq}
Given a string $S$ of length $n$ and an SLP-grammar compression $\mathcal{S}$ of $S$, there is a data structure of size $O(|\mathcal{S}|)$ that answers range minimum queries on $S$ in $O(\log n)$ time.
\end{theorem}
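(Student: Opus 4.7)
The plan is to layer a small amount of additional information on top of the random-access data structure of Bille et al.~\cite{random-access}, whose size is $O(|\mathcal{S}|)$ and whose query time is $O(\log n)$. In an $O(|\mathcal{S}|)$-time bottom-up sweep on the DAG of $\mathcal{S}$, I would first compute, for every non-terminal $A$, a pair $(m_A,p_A)$, where $m_A$ is the minimum value in the string $\mathrm{exp}(A)$ derived by $A$ and $p_A$ is its position within $\mathrm{exp}(A)$. This alone answers any RMQ whose range is exactly $\mathrm{exp}(A)$.

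For a general query $\RMQ(i,j)$, let $A$ be the lowest non-terminal in the parse tree whose expansion contains both $i$ and $j$. If the rule at $A$ is $A\to BC$, then $i$ lies in $\mathrm{exp}(B)$ at some position $i'$ and $j$ lies in $\mathrm{exp}(C)$ at some position $j'$, and
\[
\RMQ(i,j)=\min\bigl(\mathrm{suffMin}(B,i'),\ \mathrm{prefMin}(C,j')\bigr),
\]
where $\mathrm{prefMin}$ and $\mathrm{suffMin}$ denote minima over the obvious one-sided intervals. Locating $A$, $i'$, and $j'$ is a direct variant of random access: I would simulate the two descents for $i$ and $j$ in parallel along the heavy-path decomposition of~\cite{random-access}, merging them while they coincide and branching when they first split, which costs $O(\log n)$ time and no extra space.

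The crux is to implement $\mathrm{prefMin}$ and $\mathrm{suffMin}$ on a non-terminal in $O(\log n)$ time within $O(|\mathcal{S}|)$ space. For this I would reuse the heavy-path decomposition. Consider a heavy path $A_0\to A_1\to\cdots\to A_d$ with light children $L_0,\ldots,L_{d-1}$. A prefix-min query of length $k$ on $A_0$ walks down this path: at each $A_i$, depending on whether the heavy child is on the left or the right, either we stay on the heavy path (fully including $L_i$ and its precomputed minimum $m_{L_i}$, with $k$ possibly reduced) or we leave into $L_i$ with a strictly smaller prefix/suffix-min query. For each heavy path I would store the cumulative-length array together with a sparse-table-style structure over $(m_{L_0},\ldots,m_{L_{d-1}})$ supporting prefix-minimum in $O(1)$ time after the exit index has been located by a predecessor query. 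Since every root-to-leaf walk in the parse tree crosses $O(\log n)$ heavy paths, and each exit spawns at most one recursive prefix/suffix-min query on a strictly smaller non-terminal, the total work telescopes.

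The main obstacle I anticipate is avoiding a $\log^2 n$ blow-up: performing an independent binary search on each of the $O(\log n)$ heavy paths touched by a query would already cost $O(\log^2 n)$. Matching the $O(\log n)$ bound of~\cite{random-access} requires piggybacking on their predecessor/fractional-cascading infrastructure that is already embedded in the heavy-path index, rather than doing a fresh search per heavy path. Once that is in place, the remaining pieces---the bottom-up $(m_A,p_A)$ computation, the parallel descent to locate $A$, and the final combination of the two one-sided answers---are routine augmentations.
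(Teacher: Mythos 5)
Your overall architecture matches the paper's: augment the random-access structure of Bille et al.~\cite{random-access}, store for each light subtree its minimum leaf value alongside its size, and reduce an RMQ to $O(\log n)$ one-sided minimum queries, one per heavy path crossed during the two descents. The gap is in the step you yourself call the crux. In the $O(|\mathcal{S}|)$-space representation, the heavy paths of the parse tree are not available as separate arrays: they are the node-to-root paths of the forest $H$ in which $u$ is the parent of $v$ iff $u$ is the heavy child of $v$. A single non-terminal lies on the images of many distinct parse-tree heavy paths, and these paths overlap arbitrarily in $H$; in the worst case there are $\Theta(|\mathcal{S}|)$ distinct heavy paths whose images have total length $\Theta(|\mathcal{S}|^2)$. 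So ``for each heavy path I would store \ldots{} a sparse-table-style structure over $(m_{L_0},\dots,m_{L_{d-1}})$'' cannot be realized in $O(|\mathcal{S}|)$ space. Nor do prefix minima measured from the root of a tree of $H$ suffice: the per-path query asks for the minimum of the light-subtree weights on the path between the entry node $x$ and an arbitrary ancestor $y$ of $x$ in $H$, and minima, unlike sums, cannot be recovered by combining two prefix values.

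What is needed, and what the paper uses, is a path-minimum (\emph{bottleneck edge}) query structure on the edge-weighted forest $H$ itself: after sorting the edge weights, the structure of Demaine, Landau and Weimann~\cite{DemaineLandauWeimann} answers such queries in $O(1)$ time using $O(|H|)=O(|\mathcal{S}|)$ space, giving $O(\log n)$ total over the $O(\log n)$ heavy paths. A generic substitute, such as heavy-path-decomposing $H$ again and keeping per-path prefix-minimum arrays, would cost $O(\log |\mathcal{S}|)$ per heavy path and miss the stated bound. Your worry about an extra logarithmic factor from predecessor searches is the lesser issue---Bille et al.\ already locate all entry and exit points in $O(\log n)$ total time---whereas the constant-time path minimum on $H$ is a genuinely missing ingredient, not a routine augmentation.
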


\paragraph{\bf Using tree compression}
In Section~\ref{sec:CT}, we give a data structure for answering LCA queries on a compressed representation of the Cartesian tree $\mathcal{C}$. By the discussion above, this is equivalent to answering range minimum queries on $S$. 
There are various ways to compress trees. In this work we use DAG compression of the top-tree of the Cartesian tree $\mathcal C$ of $S$. These concepts will be explained in the next paragraph. It is likely that other tree compression techniques (see, e.g.,~\cite{DBLP:conf/stacs/JezL14,DBLP:conf/dlt/Lohrey15,DBLP:journals/algorithmica/GanardiHLN18}) can also yield interesting results. We leave this as future work.

A {\em top-tree}~\cite{TopTrees} of a tree $T$ is a hierarchical decomposition of the edges of $T$ into clusters. 
A cluster is a connected subgraph of T that has at most two boundary nodes (nodes with neighbors outside the cluster); the root of the cluster (called the top boundary node), and a leaf of the cluster (called a bottom boundary node).
The intersection of any two clusters is either empty, or consists of exactly one (boundary) node.
Such a decomposition can be described by a rooted ordered binary tree $\mathcal T$, called a top-tree, whose leaves correspond to clusters with individual edges of $T$, and whose root corresponds to the entire tree $T$. The cluster corresponding to a non-leaf node of $\mathcal T$ is obtained from the clusters of its two children by either identifying their top boundary nodes (horizontal merge) or by identifying the top boundary node of the left child with the bottom boundary node of the right child (vertical merge).  See Figure~\ref{figure:cartesian}. 

A {\em DAG compression}~\cite{dag-compress} of a tree $T$ is a representation of $T$ by a DAG whose nodes correspond to nodes of $T$. All nodes of $T$ with the same subtree are represented by the same node of the DAG. Thus, the DAG compression of a top-tree has two sinks (due to $T$ being a binary unlabeled tree), corresponding to the two types of leaf nodes of $T$ (a single edge cluster, either left or right), and a single source, corresponding to the root of $T$. If $u$ is the parent of $\ell$ and $r$ in $T$, then the node in the DAG representing the subtree of $T$ rooted at $u$ has edges leading to the two nodes of the DAG representing the subtree of $T$ rooted at $\ell$ and the subtree of $T$ rooted at $r$. Thus, repeating rooted subtrees in $T$ are represented only once in the DAG. See Figure~\ref{figure:cartesian}. 

A {\em top-tree compression}~\cite{top-tree}  of a tree $T$ is a DAG compression of $T$'s top-tree $\mathcal T$.
Bille et al.~\cite{top-tree} showed how to construct a data structure whose size is linear in the size of the DAG of $\mathcal T$ and supports navigational queries on $T$ in time linear in the depth of $\mathcal T$. In particular, given the preorder numbers of two vertices $u,v$ in $T$, their data structure can return the preorder number of $\LCA(u,v)$ in $T$. We show that their data structure can be easily adjusted to work with inorder numbers instead of preorder, so that, given the inorder numbers $i,j$ of two vertices in $T$ one can return the inorder number of $\LCA(i,j)$ in $T$. This is precisely $\RMQ(i,j)$ when $T$ is taken to be the Cartesian tree $\mathcal{C}$ of $S$. 

\begin{theorem}
\label{thm:rmq_ct}
Given a string $S$ of length $n$ and a top-tree compression $\mathcal{T}$ of the Cartesian tree $\mathcal{C}$, we can construct a data structure of size $O(|\mathcal{T}|)$ that answers range minimum queries on $S$ in $O(\textnormal{\texttt{depth}}(\mathcal T))$ time.
\end{theorem}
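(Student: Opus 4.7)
The plan is to adapt the LCA data structure of Bille et al.\ on top-tree compressed trees so that it takes inorder ranks rather than preorder ranks as inputs. Since $\RMQ(i,j)$ on $S$ equals the inorder rank of $\LCA(i,j)$ in $\mathcal{C}$, such an adaptation immediately yields the desired RMQ data structure.

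At each node of the DAG representing a cluster $C$ with top boundary $t$ and (if it exists) bottom boundary $b$, we store $O(1)$ fields describing the shape of $C$: the size $|C|$ and the inorder rank of $b$ inside $C$. During a top-down traversal of $\mathcal{T}$, these fields are enough to maintain, for the current cluster, the inorder range $[L_t, R_t]$ covered by $t$'s subtree in $\mathcal{C}$ and the inorder positions of both boundaries. The set of inorder positions of nodes of $C$ is then $[L_t, L_b - 1] \cup \{b\} \cup [R_b + 1, R_t]$, where $[L_b, R_b]$ is the inorder range of $b$'s subtree in $\mathcal{C}$. The total space is $O(|\mathcal{T}|)$.

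To answer $\LCA(i,j)$, I would descend $\mathcal{T}$ from the root cluster (which covers $[1,n]$). At each internal node of $\mathcal{T}$ we know how the current cluster is split into its two children---a horizontal merge splits by $t$'s left and right subtrees, while a vertical merge puts the upper child on $[L_t, L_b - 1] \cup \{b\} \cup [R_b + 1, R_t]$ and the lower child on $[L_b, R_b]$---so in $O(1)$ we can decide which child contains $i$ and which contains $j$, and descend to the deepest child containing both. When $i$ and $j$ first fall into different children of some cluster, or coincide with a boundary node, the LCA in $\mathcal{C}$ is determined by the boundary nodes of those children, giving us the desired inorder rank. The total query time is $O(\textnormal{\texttt{depth}}(\mathcal{T}))$.

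The main subtlety compared to the preorder version is that the inorder positions of a cluster form up to three pieces rather than two: in preorder, the descendants of $b$ that are excluded from $C$ are contiguous and appear right after $b$, whereas in inorder they split into $b$'s left subtree and $b$'s right subtree, separated by $b$ itself. Keeping the case analysis correct under this refinement, and resolving the LCA precisely once $i$ and $j$ diverge, is the one nontrivial aspect; the remaining arithmetic and the asymptotic bounds follow Bille et al.'s construction directly.
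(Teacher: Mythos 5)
Your overall plan---adapting the LCA machinery of Bille et al.\ to inorder numbers, storing $O(1)$ fields per DAG node, and descending $\mathcal{T}$ while tracking where $i$ and $j$ live---is the same as the paper's, and your observation that a cluster occupies up to three inorder intervals (versus two in preorder) is exactly the right refinement. The gap is in your termination rule. You stop as soon as $i$ and $j$ fall into different children $A$ and $B$ of the current cluster and claim the LCA ``is determined by the boundary nodes of those children.'' This is true for a horizontal merge, where the answer is the shared top boundary node. It is false for a vertical merge: if $A$ is the upper cluster with bottom boundary node $b$, $B$ is the lower cluster rooted at $b$, $i\in A$ and $j\in B\setminus\{b\}$, then $\LCA(i,j)=\LCA(i,b)$, which is some ancestor of $b$ \emph{inside} $A$ on its top-to-bottom boundary path---in general neither of $A$'s boundary nodes (picture $A$ as a long path with $i$ hanging off its middle). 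The paper handles this by \emph{not} stopping: it replaces the query node lying in the lower cluster by the bottom boundary node of the upper cluster and continues the descent inside the upper cluster. Without this replace-and-continue step your search can terminate many levels too early with the wrong answer.

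A smaller inaccuracy: you describe the node set of a cluster with top boundary $t$ and bottom boundary $b$ as $[L_t,L_b-1]\cup\{b\}\cup[R_b+1,R_t]$, where $[L_t,R_t]$ is the inorder range of $t$'s full subtree in $\mathcal{C}$. A cluster may consist of $t$ together with only one of $t$'s child subtrees (this is precisely how the two sides of a horizontal merge look), so its span need not be all of $[L_t,R_t]$; you must carry the cluster's actual interval decomposition down the recursion rather than recomputing it from $T(t)$. This is fixable bookkeeping. On the other hand, your idea of tracking global inorder ranges on the way down, so that the answer's global rank is available the moment the LCA is identified, is a legitimate alternative to the paper's two-pass scheme, which carries local inorder numbers downward and then converts the answer back to a global rank by walking from the terminating cluster up to the root of $\mathcal{T}$.
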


By combining Theorem~\ref{thm:rmq_ct} with the greedy construction of $\mathcal{T}$ given in~\cite{top-tree} (in which $\texttt{depth}(\mathcal T) = O(\log n)$), we can obtain an $O(|\mathcal{T}|)$ space data structure that answers RMQ in $O(\log{n})$ time.

We already mentioned that, on some RMQ instances, top-tree compression can be much better than any string compression technique.
As an example, consider the string $S=1,2,3,\cdots, n$. Its Cartesian tree is a single (rightmost, and unlabeled) path, which  compresses using top-tree compression into size $|\mathcal{T}| = O(\log{n})$. On the other hand, since $\sigma = n$, $S$ is uncompressible with an SLP. By Theorem~\ref{thm:rmq_ct}, this shows that the tree compression approach to the RMQ problem can be exponentially better than the string compression approach. In fact, for any string over an alphabet of size $\sigma = \Omega(n)$, any SLP must have $|\mathcal{S}| = \Omega(n)$ while for top-trees $|\mathcal{T}| = O(n/\log{n})$~\cite{top-tree}.
In Section~\ref{sec:compare upper} we show that, for small alphabets, $\mathcal{T}$ cannot be much larger nor much deeper than $\mathcal{S}$ for any SLP $\mathcal{S}$. 

\begin{theorem}\label{thm:tg}
Given a string $S$ of length $n$ over an alphabet of size $\sigma$, for any SLP-grammar compression $\mathcal{S}$ of $S$ there is a top-tree compression $\mathcal{T}$ of the Cartesian tree $\mathcal{C}$ with size $O(|\mathcal{S}|\cdot \sigma)$ and depth $O(\texttt{depth}(\mathcal S) \cdot \log \sigma)$.
\end{theorem}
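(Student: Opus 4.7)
My plan is to construct the top tree $\mathcal{T}$ recursively along the SLP parse tree, parameterising its clusters by pairs $(X,v)$ where $X$ is a nonterminal of $\mathcal{S}$ and $v\in[0,\sigma]$ is a ``right-context threshold''. For each such pair, the cluster $c(X,v)$ will represent the shape of the Cartesian tree $C(S_X\cdot\langle v\rangle)$, with the appended sentinel of value $v$ designated as the bottom boundary node. A single scalar $v$ is enough to characterise this shape because the Cartesian-tree MERGE of $C(S_X)$ with any right neighbour begins by cutting the right spine of $C(S_X)$ at the threshold equal to the root value of the neighbour's tree, and $c(X,v)$ precisely encodes that cut. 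Since $X$ ranges over $|\mathcal{S}|$ nonterminals and $v$ over $\sigma+1$ values, the family contains $O(|\mathcal{S}|\sigma)$ distinct clusters, which bounds the size of the top-tree DAG.

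For the inductive step, consider a rule $X\to YZ$ and a value $v$. I realise $c(X,v)$ by simulating the Cartesian-tree MERGE of $C(S_Y)$ with $C(S_Z\cdot\langle v\rangle)$ via a sequence of top-tree merges along the left spine of $C(S_Z\cdot\langle v\rangle)$. This left spine has strictly increasing values $b_1<b_2<\cdots<b_q$ with $q\leq\sigma+1$; at the $i$-th spine node I glue in the cluster $c(Y,b_i)$, together with the left subtree hanging off that spine node, using a horizontal or vertical top-tree merge. Every intermediate object produced along the way is again of the form $c(X',v')$ in the family, so no new cluster shapes need to be introduced and the DAG size remains $O(|\mathcal{S}|\sigma)$. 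To control the depth, the $O(\sigma)$ merges performed for one SLP rule are scheduled as a balanced binary tree rather than a linear chain, contributing only $O(\log\sigma)$ top-tree depth per rule. Composing across the $\texttt{depth}(\mathcal{S})$ levels of the SLP parse tree yields overall depth $O(\texttt{depth}(\mathcal{S})\cdot\log\sigma)$.

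The main technical obstacle is to show that every intermediate object produced in this merge schedule is indeed a legitimate top-tree cluster (with at most two boundary nodes), and that the horizontal/vertical merge-type labels can be assigned consistently with the spine-attachment semantics of the Cartesian-tree MERGE. A careful induction is needed to track, for each intermediate cluster, which suffix of its underlying right spine has been demoted and which portion is still exposed at the top. The analysis crucially relies on two structural facts about Cartesian trees over a size-$\sigma$ alphabet: the left spine always has strictly increasing values (so length at most $\sigma$), and the cut induced on the right spine of $C(S_Y)$ by a single threshold $v$ depends only on $Y$ and $v$. The latter is what enables us to reuse the same cluster $c(Y,v)$ across all occurrences of $Y$ that share the same right-context.
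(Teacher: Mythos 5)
Your high-level intuition (the merge of two Cartesian trees is governed by cutting the right spine of the left operand at thresholds determined by the right operand's left spine, the number of relevant thresholds is at most $\sigma$, and a balanced merge schedule gives $O(\log\sigma)$ depth per rule) matches the paper's. But the central object you propose, $c(X,v)=C(S_X\cdot\langle v\rangle)$ with the sentinel as bottom boundary node, cannot serve as a cluster of the top-tree of $\mathcal{C}$, and this breaks the argument. When $X\to YZ$, the Cartesian tree $C(S_YS_Z)$ is obtained by \emph{interleaving} the right spine of $C(S_Y)$ with the left spine of $C(S_Z)$: the demoted portion of $Y$'s right spine does not fold under a single node (as it does under the sentinel in $c(Y,v)$) but is split into up to $\sigma$ segments, each attached as the left child of a different left-spine node of $Z$. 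Consequently the set of nodes of $C(S_YS_Z)$ coming from $S_Y$ is a connected subgraph with up to $\sigma$ boundary nodes, so no single two-boundary-node cluster can represent ``the $Y$-part,'' whatever threshold you attach to it. Your proposed fix --- gluing $c(Y,b_i)$ at the $i$-th spine node for each $i$ --- would place $q$ overlapping copies of the entire tree $C(S_Y)$ inside one copy of $C(S_X)$; top-tree merges require the two clusters to be edge-disjoint and to intersect in exactly one boundary node, so this is not a legal decomposition. Relatedly, the claim that ``every intermediate object produced along the way is again of the form $c(X',v')$'' is false: a partial interleaving of $Y$'s spine with a prefix of $Z$'s left spine is not the Cartesian tree of $S_{X'}\langle v'\rangle$ for any grammar nonterminal $X'$. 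If you instead count those intermediates as new clusters, you get $O(\sigma)$ of them for each of the $\sigma+1$ values of $v$ per rule, i.e.\ $O(|\mathcal{S}|\sigma^{2})$ in total, overshooting the bound.

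The repair is to index clusters not by a whole nonterminal plus a threshold, but by a nonterminal plus a \emph{label on one of its two spines}, so that each cluster is a small edge-disjoint piece (a maximal run of right-spine nodes with a given label together with their left subtrees, and symmetrically for the left spine). This is what the paper does with its $A_i^{\ell}$ and $A_i^{r}$ clusters: under a rule $C\to AB$ with $x$ the root label of $CT(B)$, all of these pieces survive verbatim into $CT(C)$ except that a single new composite cluster $C_x^{r}$ must be assembled from the $O(\sigma)$ surviving pieces $A_i^{r}$ ($i>x$) and $B_i^{\ell}$ via $O(\sigma)$ new intermediate clusters arranged in a balanced tree of depth $O(\log\sigma)$. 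That yields $O(\sigma)$ new clusters per rule (not per rule--threshold pair), giving $O(|\mathcal{S}|\cdot\sigma)$ size and $O(\texttt{depth}(\mathcal{S})\cdot\log\sigma)$ depth. Your two structural observations (strictly increasing left spine, and the cut depending only on $(Y,v)$) are correct and are used in the paper, but they must be applied to spine fragments rather than to whole subtree shapes.
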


Observe that in the above theorem, in order to obtain small depth of $\mathcal{T}$, one could use, e.g, the SLP of Rytter~\cite{Rytter03} or that of Ganardi et al.~\cite{GanardiJL19}. Another way of achieving small depth is to ignore the SLP and use the top-tree compression of Bille et al.~\cite{top-tree} as $\mathcal{T}$. This guarantees $\mathcal{T}$ has size $O(n/\log n)$ words and depth $O(\log n)$.

%
%Plugging Rytter's~\cite{Rytter03} SLP into Theorem~\ref{thm:tg} shows that, at least for small alphabets $\sigma$, there exists an efficient algorithm for the top-tree compression approach to RMQ, which is never far worse than the SLP approach.
%
%\todo{if we want to keep the construction time in the corollary we shuold have it in the theorem as well. We can also use the new balanced grammar result to argue just existence with $\log(n)$ query time.
%Also, isn't there a $\log n$ factor missing from the space and construction time due to the blowup in Rytter?}
%
%\begin{corollary}\label{cor:tg}	
%Given a string $S$ of length $n$ over an alphabet of size $\sigma$, let $\mathcal{S}$ denote the smallest possible SLP-grammar compression of $S$.  We can construct a top-tree compression $\mathcal{T}$ of the Cartesian tree $\mathcal{C}$ of $S$  in $O(n \log{\sigma} + |\mathcal{S}|\cdot \sigma)$ time with size at most $|\mathcal{T}| = \min(O(n/\log n)$, $O(|\mathcal{S}| \cdot \sigma ))$, and there is a data structure of size $O(|\mathcal{T}|)$ that answers range minimum queries on $S$ in $O(\log n \cdot  \log \sigma)$ time.
%\end{corollary}

Finally, observe that $\mathcal{T}$ can be larger than $\mathcal{S}$ by an  $O(\sigma)$ multiplicative factor which can be large for large alphabets. It is tempting to try and improve this. However, in Section~\ref{sec:compare lower} we prove a tight lower bound, showing that this factor is unavoidable.

\begin{theorem}\label{thm:lower bound}
For every sufficiently large $\sigma$ and $s=\Omega(\sigma^{2})$, there exists a string $S$ of integers in $[0,\sigma)$ that
can be described with an SLP $\mathcal{S}$ of size $s$, such that any top-tree compression $\mathcal{T}$ of the Cartesian
tree $\mathcal{C}$ of $S$ is of size $\Omega(s\cdot \sigma)$.
\end{theorem}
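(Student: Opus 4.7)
The plan is to construct an explicit string $S$ over alphabet $[0,\sigma)$ whose SLP compression has size $\Theta(s)$ but whose Cartesian tree forces $\Omega(s\sigma)$ distinct clusters in any top-tree compression. I will focus on the base case $s=\Theta(\sigma^{2})$; the general case $s=\Omega(\sigma^{2})$ is handled by taking parallel copies or padding. The construction proceeds in three layers, and the analysis splits into a matching SLP upper bound and a top-tree lower bound.

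\textbf{Construction.} I would define $\sigma$ base blocks $B_{1},\dots,B_{\sigma}$ and $\sigma$ base blocks $C_{1},\dots,C_{\sigma}$, each of length $\Theta(\sigma)$, chosen so that their Cartesian tree shapes depend non-trivially on the index (for example, via decreasing or increasing runs over specific value ranges). For each pair $(i,j)\in[\sigma]^{2}$ form the gadget $G_{i,j}=B_{i}\cdot C_{j}$, and let $S$ be the concatenation of the $\sigma^{2}$ gadgets in a fixed order, interleaved with the global minimum character $0$ as a separator so that each $G_{i,j}$ becomes a distinct subtree of $\mathcal{C}$. The total length is $\Theta(\sigma^{3})$.

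\textbf{SLP upper bound.} Expressing the $B_{i}$'s and $C_{j}$'s takes $O(\sigma)$ rules by sharing prefixes (each $B_{i+1}$ extends $B_{i}$ by one new symbol), the $\sigma^{2}$ composite gadgets $G_{i,j}$ take $\sigma^{2}$ rules, and the full concatenation with separators takes $O(\sigma^{2})$ rules, giving $|\mathcal{S}|=\Theta(\sigma^{2})$ as desired.

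\textbf{Top-tree lower bound.} This is the heart of the proof. Because the $0$-separators are global minima, a standard argument shows that any cluster in a top-tree of $\mathcal{C}$ is essentially confined to a single gadget subtree $T_{i,j}$, so it suffices to show that each $T_{i,j}$ contributes $\Omega(\sigma)$ distinct clusters to the top-tree DAG, with essentially no sharing between distinct gadgets. The main obstacle is that the two requirements pull in opposite directions: for the SLP bound to hold the gadgets must share a lot of string structure, while the top-tree bound demands their Cartesian tree shapes share few clusters. I would reconcile this by designing the gadgets so that the Cartesian tree $T_{i,j}$ carries a fingerprint of $(i,j)$ visible at every scale --- for instance by embedding a short uniquely-decodable encoding of $(i,j)$ into the skeleton of $T_{i,j}$ --- so that every cluster of non-trivial size within $T_{i,j}$ encodes $(i,j)$ and therefore cannot coincide with a cluster in any $T_{i',j'}$. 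Proving this rigorously is the main technical difficulty and likely requires a careful case analysis of where cluster boundaries can fall within a gadget, combined with a choice of base blocks that makes the resulting Cartesian tree locally ``incompressible.''
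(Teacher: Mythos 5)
Your high-level architecture (two families of blocks, gadgets indexed by pairs, separators equal to the global minimum, then counting distinct clusters per gadget) matches the paper's, but the proposal leaves exactly the load-bearing step unproven, and the specific design you sketch cannot support it. With a gadget of the form $G_{i,j}=B_i\cdot C_j$, the Cartesian tree of $G_{i,j}$ consists of $CT(B_i)$ and $CT(C_j)$ glued along a single spine (the rightmost path of $CT(B_i)$ merged with the leftmost path of $CT(C_j)$). Every cluster lying inside $CT(B_i)$ away from that spine depends only on $i$ and is therefore shared by all $T_{i,j'}$, and symmetrically for $C_j$; so the only clusters that can ``know'' the pair $(i,j)$ are those touching the spine. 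Your proposed fix --- a fingerprint of $(i,j)$ visible at every scale --- is not achievable for such a gadget, because the string, and hence the tree, factors into an $i$-part and a $j$-part. What actually makes the bound work in the paper is that the spine itself is where the action happens: the blocks are designed so that the Cartesian tree of the concatenation interleaves $\Theta(\sigma)$ subtrees coming from the first block with $\Theta(\sigma)$ subtrees coming from the second block in alternation along a zigzag path (the ``shuffle'' phenomenon), so that each of the $\Theta(\sigma)$ local configurations along the spine depends jointly on $(i,j)$; this is Proposition~\ref{prop:many}. The second missing ingredient is the argument converting ``$\Omega(s\sigma)$ distinct local configurations'' into ``$\Omega(s\sigma)$ distinct clusters'': one must charge each configuration to the smallest cluster containing it and bound the charge per cluster by $O(1)$, which requires a case analysis over horizontal and vertical merges (Lemma~\ref{lem:generate}). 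Your ``standard argument'' that clusters are confined to a single gadget is also false as stated --- a cluster may contain a stretch of the separator path together with several gadget subtrees --- though this particular point is repairable by the same charging argument.

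Separately, the reduction of the general case $s=\Omega(\sigma^2)$ to $s=\Theta(\sigma^2)$ by ``parallel copies or padding'' does not work: identical copies produce identical subtrees, which the top-tree DAG shares, so they contribute $O(1)$ new clusters, and padding contributes none. You need $\Theta(\sqrt{s})$ pairwise distinct blocks in each family (possibly far more than $\sigma$ of them), all with distinct Cartesian trees yet all derivable from an SLP of size $O(\sqrt{s})$; the paper obtains these from the $2^{\ell}-1$ strings over a two-letter sub-alphabet. As written, the proposal is a plausible plan in which the two decisive ideas --- the shuffle structure of the Cartesian tree of a concatenation, and the $O(1)$-charging lemma --- are absent.
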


\section{RMQ on Compressed Representations}

\subsection{Compressing the string}\label{sec:slp}
Given an SLP compression $\mathcal{S}$ of $S$, Bille et al.~\cite{random-access} presented a data structure of size $O(|\mathcal{S}|)$ that can report any $S[i]$ in $O(\log n)$ time. We now prove Theorem~\ref{thm:slprmq} using a rather straightforward extension of this data structure to support range minimum queries.

The key technique used in~\cite{random-access} is an efficient representation of the {\em heavy path decomposition} of the SLP's parse tree. For each node $v$ in the parse tree, we select the child of $v$ that derives the longer string to be a {\em heavy} node. The other child is {\em light}. Ties can be broken arbitrarily.  Heavy edges are edges going into a heavy node and light edges are edges going into a light node. The heavy edges decompose the parse tree into \textit{heavy paths}. The number of light edges on any path from a node $v$ to a leaf is $O(\log |v|)$ where $|v|$ denotes the length of the string derived from $v$. A traversal of the parse tree from its root to the $i$'th leaf $S[i]$ enters and exists at most $\log n$ heavy paths. Bille~et~al. show how to simulate this traversal in $O(\log n)$ time on a representation of the heavy path decomposition that uses only $O(|\mathcal{S}|)$ space. In other words, their structure finds the entry and exit vertices of all heavy paths encountered during the root-to-leaf traversal in total $O(\log n)$ time. We elaborate on this now. 

Note that we cannot afford to store the entire parse tree as its size is $n$ which can be exponentially larger than $|\mathcal{S}|$. Instead, Bille~et~al. use the following $O(|\mathcal{S}|)$-space representation $H$ of the heavy paths: $H$ is a forest of trees whose roots correspond to terminals of $\mathcal{S}$ (integers in $[0,\sigma)$) and whose non-root  nodes correspond to nonterminals of $\mathcal{S}$. A node $u$ is the parent of $v$ in $H$ iff $u$ is the heavy child of $v$ in the parse tree (observe that wherever the nonterminal $u$ appears in the parse tree it always has the same heavy child $v$). We assign the edge of $H$ from $v$ to its parent $u$ with a left weight $\ell(v,u)$ and right weight $r(v,u)$ defined as follows. If $u$ is the left child of $v$ in $H$ then the left weight is $0$ and the right weight is the subtree size of the right child of $v$ in the parse tree. Otherwise, the right weight is $0$ and the left weight is the subtree size of the left child of $v$ in the parse tree.  

Using $H$, we can then simulate a root-to-leaf traversal of the parse tree. Suppose we have reached a vertex $u$ on some heavy path $P$. Finding out where we need to exit $P$ (and enter another heavy path) easily translates to a \emph{weighted level ancestor} query (using the $\ell(v,u)$ or $r(v,u)$ weights) from $u$ on $H$ (see~\cite{random-access} for details). Given a positive number $x$, such a query returns the rootmost ancestor of $u$ in $H$ whose distance from the root is at least $x$. Bille et al. showed how to answer all such queries (i.e. how to find the entry point and exit point on all heavy paths visited during a root-to-leaf traversal of the parse tree) in total $O(\log n)$ time. 

Extending their structure to support range minimum queries is quite simple. We perform a random access to the $i$'th and the $j$'th leaves in the parse tree. This identifies the entry and exit points of all traversed heavy paths, and, in particular, the unique heavy path $P'$ containing the lowest common ancestor of $i$ and $j$. Then, we wish to find the minimum leaf value in all the subtrees hanging to the right (resp. left) of the path starting from $P'$ and going down to $i$ (resp. $j$). To achieve this for $i$ (the case of $j$ is symmetric), in addition to the subtree sizes ($r(v,u)$) we also store the minimum leaf value ($r'(v,u)$) in these subtrees. This way, the problem now boils down to performing $O(\log n)$ {\em bottleneck edge queries} on $H$. Given a forest $H$ with edge weights (the $r'(v,u)$ weights), a bottleneck edge query between two vertices $x,y$ (the entry and exit points) returns the minimum edge weight on the unique $x$-to-$y$ path in $H$. Demaine et al.~\cite{DemaineLandauWeimann} showed that, after sorting the edge weights in $H$, one can construct in $O(|H|)=O(|\mathcal{S}|)$ time and space a data structure that answers bottleneck edge queries in constant time. This concludes the proof of Theorem~\ref{thm:slprmq}.

\subsection{Compressing the Cartesian tree}\label{sec:CT}

\begin{figure}[t!]
\centering
\ifarxiv \includegraphics[scale=0.45]{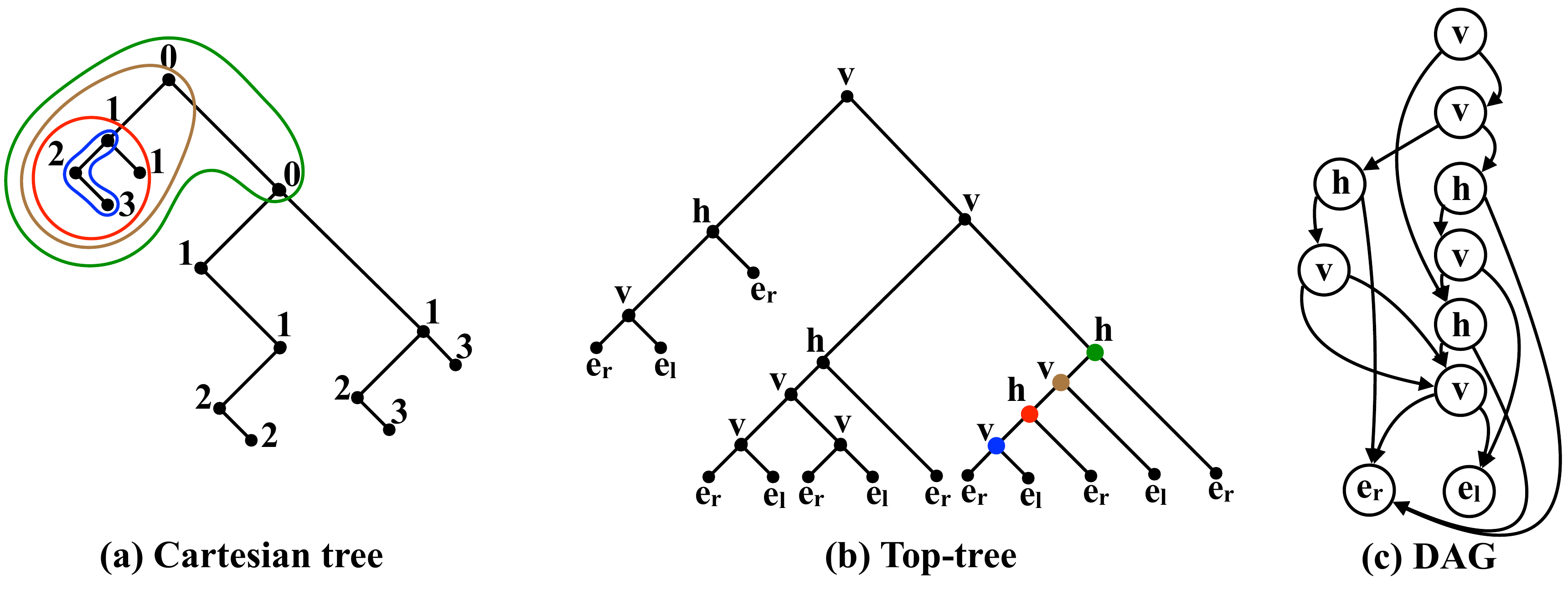}
\else \includegraphics[scale=0.35]{image}
\fi

\caption{The string $S=``2 3 1 1 0 1 2 2 1 0 2 3 1 3"$ and its corresponding (a) Cartesian tree , (b) top-tree, and (c) DAG representation of the top-tree. 
In (a), each node is labeled by its corresponding character in $S$ (these labels are for illustration only, the top-tree construction treats the Cartesian tree as an unlabeled tree). 
In (b) and (c), each node is labeled by $e_l$ or $e_r$ (atomic edge clusters), $v$ (a vertical merge), or $h$ (a horizontal merge). Four clusters are marked with matching colors in (a) and in (b).   
\label{figure:cartesian}}
\end{figure}

We next prove Theorem~\ref{thm:rmq_ct}, i.e. how to support range minimum queries on $S$ using a compressed representation of the {\em Cartesian tree}~\cite{Cartesian-tree}. Recall that the Cartesian tree $\mathcal{C}$ of $S$ is defined as follows: If the smallest character in $S$ is $S[i]$ (in case of a tie we choose a leftmost position) then the root of $\mathcal{C}$ corresponds to $S[i]$, its left child is the Cartesian tree of $S[1,i-1]$ and its right child is the Cartesian tree of $S[i+1,n]$.
By definition, the $i$'th character in $S$ corresponds to the node in $\mathcal{C}$ with
inorder number $i$ (we will refer to this node as node $i$). 
Observe that for any nodes $i$ and $j$ in $\mathcal{C}$, the lowest common ancestor $\LCA(i, j)$ of these nodes in $\mathcal{C}$ corresponds to $\RMQ(i, j)$ in $S$.  
This implies that without storing $S$ explicitly, one can answer range minimum queries on $S$ by answering LCA queries on $\mathcal{C}$.
In this section, we show how to support LCA queries on $\mathcal{C}$ on a {\em top-tree} compression~\cite{top-tree} $\mathcal{T}$ of $\mathcal{C}$. The query time is $O(\texttt{depth}(\mathcal T))$ which can be made $O(\log{n})$ using the (greedy) construction of Bille et al.~\cite{top-tree} that gives $\texttt{depth}(\mathcal T)=O(\log{n})$. We first briefly restate the construction of Bille et al., and then extend it to support LCA queries. 

The {\em top-tree} of a tree $T$ (in our case $T$ will be the Cartesian tree $\mathcal{C}$) is a hierarchical decomposition of $T$ into
{\em clusters}. 
Let $v$ be a node in $T$ with children $v_1,v_2$.\footnote{Bille et al. considered trees with arbitrary degree, but since our tree $T$ is a Cartesian tree we can focus on binary trees.} Define $T(v)$ to be the subtree of $T$ rooted at $v$.
Define $F(v)$ to be the forest $T(v)$ without $v$.
A {\em cluster} with {\em top boundary node} $v$ can be either (1) $T(v)$, (2) $\{v\} \cup T(v_1)$, or (3) $\{v\} \cup T(v_2)$. For any node $u \neq v$ in a cluster with top boundary node $v$, deleting from the cluster all descendants of $u$ (not including $u$ itself) results in a {\em cluster} with {\em top boundary node} $v$ and {\em bottom boundary node} $u$. The top-tree is a binary tree defined as follows (see Figure~\ref{figure:cartesian}):

\begin{itemize}
\item The root of the top-tree is the cluster $T$ itself. 

\item The leaves of the top-tree are (atomic) clusters corresponding to the edges of $T$. An edge $(v,parent(v))$ of $T$ is a cluster where $parent(v)$ is the top
boundary node. If $v$ is a leaf then there is no bottom boundary node,
otherwise $v$ is a bottom boundary node. If $v$ is the right child of $parent(v)$ then we label the $(v,parent(v))$ cluster as $e_r$ and otherwise as $e_\ell$.

\item Each internal node of the top-tree is a {\em merged} cluster of its two children. Two edge disjoint
clusters $A$ and $B$ whose nodes overlap on a single boundary node can
be merged if their union $A \cup B$ is also a cluster (i.e. contains at most two boundary nodes). If $A$ and $B$ share their top boundary node then the merge is called {\em horizontal}. If the top boundary node of $A$ is the bottom boundary node of $B$ then the merge is called {\em vertical} and in the top-tree $A$ is the left child and $B$ is the right child. 
\end{itemize}

Bille et al.~\cite{top-tree} proposed a greedy algorithm for constructing the top-tree: Start with $n-1$  clusters, one for each edge of $T$, and at each step merge all possible clusters. More precisely, at each step, first do all possible horizontal merges and then do all possible vertical merges. 
After constructing the top-tree, the actual compression $\mathcal{T}$ is obtained by representing the top-tree as a directed acyclic graph (DAG) using the algorithm of~\cite{dag-compress}. Namely, all nodes in the top-tree that have a child with subtree $X$ will point to the same subtree $X$ (see Figure~\ref{figure:cartesian}). 
Bille et al.~\cite{top-tree} showed that using the above greedy algorithm, one can construct
$\mathcal{T}$ of size $|\mathcal{T}|$ that can be as small as $\log n$ (when the input tree $T$ is highly repetitive) and in the worst-case is at most $O(n/\log_\sigma^{0.19}{n})$. 
Dudek and Gawrychowski~\cite{toptree-worst} have  recently improved the worst-case bound to $O(n/\log_{\sigma}{n})$ by merging in the $i$'th step only clusters whose size is at most $\alpha^i$ for some constant $\alpha$. Using either one of these merging algorithms to obtain the top-tree and its DAG representation $\mathcal{T}$, a data structure of size $O(\mathcal{|T|})$ can then be constructed to support various queries on $T$. In particular, given nodes $i$ and $j$ in $T$ (specified by their position in a {\em preorder} traversal of $T$) Bille et al. showed how to find the (preorder number of) node $\LCA(i,j)$ in $O(\log{n})$ time. Therefore, the only change required in order to adapt their data structure to our needs is the representation of nodes by their {\em inorder} rather than preorder numbers.   
   
The {\em local preorder number} $u_C$ of a node $u$ in $T$ and a cluster $C$ in $\mathcal{T}$ is the preorder number of $u$ in a preorder traversal of the cluster $C$. To find the preorder number of $\LCA(i, j)$ in $O(\log{n})$ time, Bille et al. showed that it suffices if for any node $u$ and any cluster $C$ we can compute $u_C$ in constant time from $u_A$ or $u_B$ (the local preorder numbers of $u$ in the clusters $A$ and $B$ whose merge is the cluster $C$) and vice versa. In Lemma 6 of~\cite{top-tree} they show that indeed they can compute this in constant time. The following lemma is a modification of that lemma to work when $u_A, u_B$ and $u_C$ are  {\em local inorder numbers}.

\begin{lemma}[Modified Lemma 6 of~\cite{top-tree}]
\label{lem:local_inorder}
Let $C$ be an internal node in $\mathcal{T}$ corresponding to the cluster obtained by merging clusters $A$ and $B$. For any node $u$ in $C$, given $u_C$ we can tell in constant time if $u$ is in $A$ (and obtain $u_A$) in $B$ (and obtain $u_B$) or in both. Similarly, if $u$ is in $A$ or in $B$ we can obtain $u_C$ in constant time from $u_A$ or $u_B$.
 
\end{lemma}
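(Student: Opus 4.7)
The plan is to follow the structure of the original Lemma~6 of~\cite{top-tree}, storing constant-sized metadata at each cluster and deriving explicit arithmetic formulas that relate $u_A$, $u_B$, and $u_C$. Concretely, with each distinct cluster (i.e., each node of the DAG $\mathcal{T}$) I would store: its size $|C|$, its merge type (horizontal vs.\ vertical, plus a single bit recording which of $A, B$ sits on the left in the horizontal case), and the local inorder positions $t_C$ of its top boundary and $b_C$ of its bottom boundary (the latter only when present). Since these quantities depend only on the abstract shape of the cluster, they are identical at every occurrence of the cluster in the top-tree, so storing them at each DAG node does not inflate the space.

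Next I would analyse the two merge types. Suppose $C$ arises from a vertical merge of a lower cluster $A$ (with top boundary $v$) and an upper cluster $B$ (having $v$ as its bottom boundary), and write $p_A = t_A$ and $p_B = b_B$ for the local inorder positions of $v$. Then the inorder traversal of $C$ decomposes into five consecutive segments: the $p_B-1$ nodes of $B$ preceding $v$, the $p_A-1$ nodes of $A$ preceding $v$, the node $v$ itself, the $|A|-p_A$ nodes of $A$ following $v$, and the $|B|-p_B$ nodes of $B$ following $v$. This yields the transfer rules
\[
u_C = \begin{cases} u_B & u \in B,\ u_B < p_B, \\ u_A + p_B - 1 & u \in A \setminus \{v\}, \\ u_B + |A| - 1 & u \in B,\ u_B > p_B, \end{cases}
\]
together with $u_C = p_A + p_B - 1$ when $u = v$. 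For a horizontal merge, $A$ and $B$ share the top boundary $v$, and because $T$ is binary the two subtrees of $v$ are split between them; consequently $v$ sits at local inorder position either $1$ or $|A|$ in $A$ (and symmetrically in $B$). Reading $t_A$ to determine which side $A$ covers, the inorder of $C$ is obtained by concatenating the non-$v$ part of the cluster holding $v$'s left subtree, then $v$, then the non-$v$ part of the other cluster, giving an equally simple rule (e.g., $u_C = u_A$ on the left cluster and $u_C = u_B + |A| - 1$ on the right cluster when $t_A = |A|$).

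The reverse direction, mapping $u_C$ back to $u_A$ or $u_B$, is then achieved by comparing $u_C$ against the segment boundaries computed from $|A|, p_A, p_B$: the interval containing $u_C$ determines whether $u$ lies in $A$, in $B$, or at the shared boundary (in the last case $u_C$ coincides with the merge position and both $u_A$ and $u_B$ can be returned). All operations are $O(1)$ arithmetic on the metadata stored at $C$ and at its children $A, B$, exactly as required.

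The main difficulty I anticipate is not the arithmetic, which is elementary, but the case analysis: one must verify that the above segment decomposition is exhaustive across both handedness variants of the horizontal merge (left/right placement of $A$) and across the cases $u_A < p_A$ versus $u_A > p_A$ in the vertical merge, and one must check that the stored metadata is truly intrinsic to the cluster's shape so that every occurrence of a given cluster in the DAG uses the same $|C|, t_C, b_C$. Once these invariants are pinned down, the lemma follows as a direct translation of the Bille et al.\ argument from preorder coordinates to inorder coordinates.
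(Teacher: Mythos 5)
Your proposal is correct and follows essentially the same route as the paper: store $O(1)$ shape-intrinsic metadata per DAG node (the paper keeps the first/last inorder positions of $A$ and $B$ within $C$, the cluster sizes, and the position of the shared boundary node, which are interderivable with your $|C|, t_C, b_C$) and then translate between local inorder numbers by an interval/offset case analysis on the two merge types, with the shared boundary node handled as a special case. Your five-segment decomposition for the vertical merge and the position-$1$-or-$|A|$ observation for the horizontal merge are exactly the content of the paper's case analysis, so the proofs match.
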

\begin{proof}
 We show how to obtain $u_A$ or $u_B$ when $u_C$ is given. Obtaining $u_C$ from $u_A$ or $u_B$ is done similarly. For each node $C$, we store a following information: 
\begin{itemize}
\item $\ell(A)$ ($r(A)$): the first (last) node visited in an inorder traversal of $C$ that is  also a node in $A$. 
\item $\ell(B)$ ($r(B)$): the first (last) node visited in an inorder traversal of $C$ that is  also a node in $B$. 
\item the number of nodes in $A$ and in $B$. 
\item $u'_C$, where $u'$ is the common boundary node of $A$ and $B$.  
\end{itemize}

\noindent Consider the case where $C$ is obtained by merging $A$ and $B$ vertically (when the bottom boundary node of $A$ is the top boundary node of $B$), and where $B$ includes vertices that are in the left subtree of this boundary node, the other case is handled similarly: 
\begin{itemize}
\item if $u_C < \ell(B)$ then $u$ is a node in $A$ and $u_A = u_C$. 
\item if $\ell(B) \le u_C \le r(B)$ then $u$ is a node in $B$ and $u_B = u_C-\ell(B)+1$. 
For the special case when $u_C = u'_C$ then $u$ is also the bottom boundary node in $A$ and $u_A = \ell(B)$.
\item if $u_c > r(B)$ then $u$ is a node in $A$ visited after visiting all the nodes in $B$ then $u_A = u_C -|B|+1$. 
\end{itemize}

\noindent When $C$ is obtained by merging $A$ and $B$ horizontally (when $A$ and $B$ share their top boundary node and $A$ is to the left of $B$):
\begin{itemize}
\item if $u_C  < r(A)$ then $u$ is a node in $A$ and $u_A = u_C$.
 
\item if $u_C \ge r(A) $ then $u$ is a node in $B$ and $u_B = u_C-|A|+1$. For the special case when $u_C = u'_C$ then $u$ is also the top boundary node in $A$ and $u_A = |A|$.\qed
\end{itemize}
\end{proof}

To complete the proof of Theorem~\ref{thm:rmq_ct}, we now explain how to use Lemma~\ref{lem:local_inorder}, given the inorder numbers of nodes $x$ and $y$ in $T$, to compute in $O(\texttt{depth}(\mathcal T))$ time the inorder number of $\LCA(x,y)$ in $T$. This is identical to the procedure of Bille et al. (except for replacing preorder with inorder) and is given here for completeness. 

We begin with a top-down search on $\mathcal T$ to find the first cluster whose top boundary node is $\LCA(x,y)$ (or alternatively to reach a leaf cluster whose top or bottom boundary node is $\LCA(x,y)$). At each cluster $C$ in the search we compute the local inorder numbers $x_C$ and $y_C$ of $x$ and $y$ in $C$. Initially, for the root cluster $T$ we set $x_T = x$ and $y_T = y$. If we reach a leaf cluster $C$ we stop the search. Otherwise, $C$ is an internal cluster with children $A$ and $B$. If $x_C$ and $y_C$ are in the same child cluster, we continue the search in that cluster after computing the new local inorder numbers of $x$ and $y$ in the appropriate child cluster (in constant time using Lemma~\ref{lem:local_inorder}). Otherwise, $x_C$ and $y_C$ are in different child clusters. If $C$ is a horizontal merge then we stop the search. If $C$ is a vertical merge (where $A$'s bottom boundary node is $B$'s top boundary node) then we continue the search in $A$ after setting the local inorder number of the node (either $x$ or $y$) that is in $B$ to be the bottom boundary node of $A$.

After finding the cluster $C$ whose top boundary node $v$ is $\LCA(x,y)$, we have the inorder number of $v$ in $C$ and we need to compute the inorder number of $v$ in the entire tree $T$. This is done by repeatedly applying Lemma~\ref{lem:local_inorder} on the path in $\mathcal T$ from $C$ to the root of $\mathcal T$.

\section{Compressing the String vs. the Cartesian Tree }
In this section we compare the sizes of the SLP compression $\mathcal{S}$ and the top-tree compression $\mathcal{T}$.

\subsection{An upper bound}\label{sec:compare upper}

We now show that given any SLP $\mathcal{S}$ of height $h$, we can construct a top-tree compression $\mathcal{T}$ based on $\mathcal{S}$ (i.e. non-greedily) such that $|\mathcal{T}| = O(|\mathcal{S}| \cdot \sigma)$ and the height of $\mathcal{T}$ is $O(h \log \sigma)$. Using $\mathcal{T}$, we can then answer range minimum queries on $S$ in time $O(h \log \sigma)$ as done in Section~\ref{sec:CT}. Furthermore, we can construct $\mathcal{T}$ in $O(n \log{\sigma} + |\mathcal{S}| \cdot \sigma)$ time using Rytter's SLP~\cite{Rytter03} as $\mathcal{S}$. Then, the height of $\mathcal{S}$ is $h=\log n$ and the size of $\mathcal{S}$ is larger than the optimal SLP by at most a multiplicative $\log n$ factor. %Combined with Rytter's SLP, and since every unlabeled tree has a top-tree compression $\mathcal{T}$ of size $O(n/\log n)$ and height $\log n$~\cite{top-tree}, we obtain Corollary~\ref{cor:tg}. 

Consider a rule $C\rightarrow AB$ in the SLP. We will construct a top-tree (a hierarchy of clusters) of $C$ (i.e. of the Cartesian tree $CT(C)$ of the string derived by the SLP variable $C$) assuming we have the top-trees of (the Cartesian trees of the strings derived by) $A$ and of $B$. 
We show that the top-tree of $C$ contains only $O(\sigma)$ new clusters that are not clusters in the top-trees of $A$ and of $B$, and that the height of the top-tree is only $O(\log \sigma)$ larger than the height of the top-tree of $A$ or the top-tree of $B$.
To achieve this, for any variable $A$ of the SLP, we will make sure that certain clusters (associated with its rightmost and leftmost paths) must be present in its top-tree. See Figure~\ref{figure:topdag}.

\begin{figure}[htp]
\centering
\hspace*{-0.3cm}
\ifarxiv \includegraphics[scale=0.9]{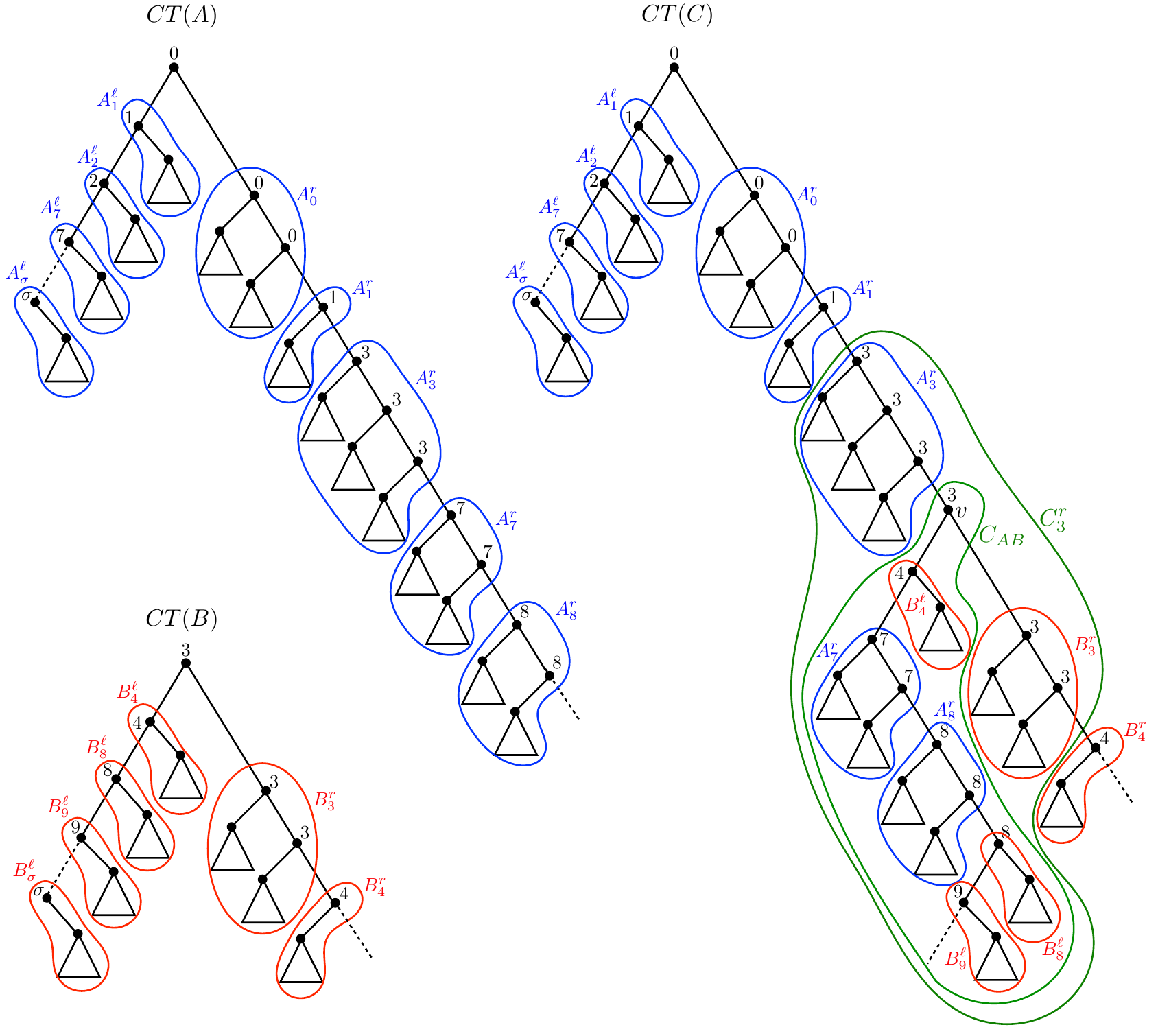}
\else \includegraphics[scale=0.74]{imageC.pdf}
\fi

\caption{The Cartesian tree of SLP variables $A,B,C$  where $C\rightarrow AB$. The 
single additional cluster $C_3^r$ (in green) is formed by merging existing clusters from $A$ (in blue) and from $B$ (in red). First, cluster $C_{AB}$ (corresponding to the tree $CT(A_sB_p)$) is formed by alternating subpaths of the leftmost path in $CT(B)$ and the rightmost path in $CT(A)$. %(here, $x=3, y=7$, and $z=8$). 
Then, $C_{AB}$ is merged with $B_3^r$, $v$, and $A_3^r$. In this example, $A_s = \{A_i^r\ |\  i>3 \}$ and $B_p = \{B_i^\ell\ |\  i>3 \}$.
\label{figure:topdag}}
\end{figure}

\paragraph{\bf The structure of \boldmath$CT(C)$} We first describe how the Cartesian tree $CT(C)$ of the string derived by variable $C$ can be described in terms of the Cartesian trees $CT(A)$ and $CT(B)$. 
We label each node in a Cartesian tree with its corresponding character in the string. These labels are only used for the sake of this description, the actual Cartesian tree is an unlabeled tree. By definition of the Cartesian tree, the labels are monotonically non-decreasing as we traverse any root-to-leaf path. 
Let $\ell(A)$ (respectively $r(A)$) denote the path in $CT(A)$ starting from the root and following left (respectively right) edges. 
Since we break ties by taking the leftmost occurrence of the same character we have that the path $\ell(A)$ is strictly increasing (the path $r(A)$ is just non-decreasing).

Let $x$ be the label of the root of $CT(B)$. To simplify the presentation we assume that the label of the root of $CT(A)$ is smaller or equal to $x$ (the other case is symmetric). 
 Split $CT(A)$ by deleting the edge connecting the last node on $r(A)$ that is smaller or equal to $x$ with its right child. 
%(again, for simplicity of presentation we assume without loss of generality that this node exists). 
 The resulting two subtrees are the Cartesian trees $CT(A_p)$ and $CT(A_s)$ of a prefix $A_p$ and a suffix  $A_s$ of $A$ whose concatenation is $A$. The prefix $A_p$ ends at the last character of $A$ that is at most $x$. Split $CT(B)$ by deleting the edge connecting the root to its right child. The resulting two subtrees are the Cartesian trees $CT(B_p)$ and $CT(B_s)$ of a prefix and a suffix of $B$. The prefix $B_p$ ends with the first occurrence of $x$ in $B$. 
 
 The Cartesian tree $CT(C)$ of the concatenation $C = AB$ can be described as follows: Consider the Cartesian trees $CT(A_p)$ of $A_p$, $CT(B_s)$ of $B_s$, and $CT(A_sB_p)$ of the concatenation of $A_s$ and $B_p$. %Attach $CT(A_sB_p)$ as the left child of the  root of $CT(B_s)$ (it is easy to verify that the root of $CT(B_s)$ has no left child). Then attach the resulting tree as the right child of the last node of the rightmost path of $CT(A_p)$. See Figure~\ref{figure:topdag}.
It is easy to verify that the root of $CT(A_sB_p)$ has no right child. 
Attach $CT(B_s)$ as the right child of the root of $CT(A_sB_p)$. Then attach the resulting tree as the right child of the last node of the rightmost path of $CT(A_p)$. See Figure~\ref{figure:topdag}. 

The above structural description of $CT(C)$ is not enough for our purposes. In particular, a recursive computation of $CT(A_sB_p)$ would lead to a linear $O(\sigma)$ increase in the height of the top-tree of $CT(C)$ compared to that of $CT(A)$ and $CT(B)$. In order to guarantee a logarithmic $O(\log \sigma)$ increase, we need to describe the structure in more detail.

For a node $v$ with label $i$ appearing in  $\ell(A)$ other than the root of $A$, we define $A_i^\ell$ subtree rooted at the $v$'s right child, together with $v$. Next consider the path $r(A)$. For every label $i$ there can be multiple vertices with label $i$ that are consecutive on $r(A)$. We define $A_i^r$ to be the subtree of $A$ induced by the union of all vertices of $r(A)$ that have label $i$ together with the subtrees rooted at their left children. Again, we treat the first node of $r(A)$ (i.e. the root of $CT(A)$) differently: if the label of the root is $i$ then $A_i^r$ does not include the root nor its left subtree. See Figure~\ref{figure:topdag} (left).

It is easy to see that $CT(A_p)$ consists of the subtree of $A$ induced by all the $A_i^\ell$'s and all the $A_j^r$ for $j \leq x$. See Figure~\ref{figure:topdag} (right). It is also easy to see that $CT(B_s)$ consists of all the $B_i^r$'s. The structure of $CT(A_sB_p)$ is a bit more involved. It consists of alternations of $A_i^r$'s and $B_i^\ell$'s which we describe next.

We describe the structure of $CT(A_sB_p)$ constructively from top to bottom. This constructive procedure is just for the sake of describing the structure of $CT(A_sB_p)$. We will later describe a different procedure for constructing the clusters of the corresponding top-trees. 
%and describe a different procedure for constructing the cluster $C_{AB}$ corresponding to $CT(A_sB_p)$ from the clusters $B_i^\ell$, the clusters $A_i^r$ for $i>x$, and $O(\sigma)$ additional edges.
The root of $CT(A_sB_p)$ is the root of $B$. Initially, the root is marked L (indicating that the root can only obtain a left child). Throughout the procedure we will make sure that exactly one node is marked (by either L or R). 
For increasing values of $i$, starting with $i=x+1$ and ending when $i$ exceeds $\sigma$, we do the following: 
\begin{enumerate}
	\item if $A_i^r$ is defined: 
	\begin{enumerate}
		\item attach $A_i^r$ as the left child of the marked node if the marked node is marked with L, and as the right child otherwise, 
		\item unmark the marked node, and instead mark the last node on the rightmost path of $A_i^r$ with R. 
	\end{enumerate}
	\item if $B_i^\ell$ is defined: 
	\begin{enumerate}
		\item attach $B_i^\ell$ as the left child of the marked node if the marked node is marked with L, and as the right child otherwise, 
		\item unmark the marked node, and instead mark the root of $B_i^\ell$ with L. 
	\end{enumerate}
\end{enumerate}

Note that $CT(A_sB_p)$ consists of all subtrees $B_i^\ell$, the subtrees $A_i^r$ for $i>x$, and $O(\sigma)$ additional edges (these are the edges that were created when attaching subtrees to a marked node during the construction procedure). Also observe that each subtree $A_i^r$ is incident to at most two new edges, one incident to the root of $A_i^r$ and the other incident to the rightmost node of $A_i^r$. Similarly, each subtree $B_i^\ell$ is incident to at most two new edges, both incident to the root of $B_i^\ell$. Imagine contracting each $A_i^r$ and each $B_i^\ell$ into a single node. Then the result would be a single ``zigzag'' path of new edges. We will next use these properties when describing the clusters of $CT(C)$.

\paragraph{\bf The clusters of \boldmath$CT(C)$}
%The above structural description of $CT(C)$ is not enough for our purposes. In particular, a recursive computation of $CT(A_sB_p)$ would lead to a linear $O(\sigma)$ increase in the height of the top-tree of $CT(C)$ compared to that of $CT(A)$ and $CT(B)$. In order to guarantee a logarithmic $O(\log \sigma)$ increase, we need to describe additional structure captured by the clusters and their merging. 

%For a node with label $i$ appearing in  $\ell(A)$ we define $A_i^\ell$ to be the subtree rooted at the node's right child. We do this for all nodes except for the first node of $\ell(A)$ (i.e. the root of $CT(A)$). Next consider the path $r(A)$. For every label $i$ there can be multiple vertices with label $i$ that are consecutive on $r(A)$. We define $A_i^r$ to be the union of all vertices of $r(A)$ that have label $i$ together with the subtrees rooted at their left children. Again, we treat the first node of $r(A)$ (i.e. the root of $CT(A)$) differently: if its label is $i$ then $A_i^r$ does not include this vertex (the root) nor its left subtree. See Figure~\ref{figure:topdag} (left). 

We next describe how to obtain the clusters for the top-tree of $CT(C)$ from the the clusters of the top-trees of $CT(A)$ and $CT(B)$. 
For each variable (say $A$) of the SLP $\mathcal S$ of $S$, we require that in the top-tree of $S$ there is a cluster for every $A_i^\ell$ and every $A_i^r$. 
Clusters for $A_i^\ell$ only have a top boundary node.
Clusters for $A_i^r$ have a top boundary node (the root of $A_i^r$), and a bottom boundary node (the last node on the rightmost path of $A_i^r$). 
We will show how to construct all the $C_i^\ell$ and $C_i^r$ clusters of $C$ by merging clusters of the form $A_i^\ell, A_i^r,B_i^\ell$, and $B_i^r$ while introducing only $O(\sigma)$ new clusters, and with $O(\log \sigma)$ increase in height. 
First observe that, by the structure of $CT(A_p)$, we have that, for every $i$, $C_i^\ell=A_i^\ell$, so we already have these clusters. Next consider the clusters $C_i^r$. Recall that $x$ denotes the label of the root of $CT(B)$. By the structure of $CT(A_p)$, $C_i^r = A_i^r$ for every $i<x$ and, by the structure of $CT(B_s)$, $C_i^r = B_i^r$ for every $i>x$. Therefore, the only new cluster we need to create is $C_x^r$. 

The cluster $C_x^r$ corresponds to a subtree of $CT(C)$ that is composed of the following components: First, it contains the cluster $A_x^r$. Then, $CT(A_sB_p)$ is connected as the right child of the rightmost node of $A_x^r$. Finally, $B_x^r$ is connected as the right child of the root of $CT(A_sB_p)$. Since we already have the clusters for $A_x^r$ and $B_x^r$, we only need to describe how to construct a cluster $C_{AB}$ corresponding to $CT(A_sB_p)$. The cluster $C_x^r$ will be obtained by merging these three clusters together.

Recall from the structural discussion of $CT(A_sB_p)$ that  
$CT(A_sB_p)$ consists of all subtrees $B_i^\ell$ and the subtrees $A_i^r$ for $i>x$. We already have the clusters for these $O(\sigma)$ subtrees. These subtrees are connected together to form $CT(A_sB_p)$ by $O(\sigma)$ new edges that are only incident to the boundary nodes of the clusters. We will merge the existing clusters to form the cluster $C_{AB}$ by creating $O(\sigma)$ new clusters, but only increasing the height of the top-tree by $O(\log \sigma)$.
Performing the merges linearly would increase the height of the top-tree by $O(\sigma)$. Instead, the merging process consists of $O(\log \sigma)$ phases. In each phase we choose a maximal set of disjoint pairs of clusters that need to be merged and perform these merges. Since $CT(A_sB_p)$ is a binary tree, we can perform half the remaining merges in each phase. This process can be described by a binary tree of height $O(\log \sigma)$ whose leaves are the $O(\sigma)$ clusters $B_i^\ell$ and $A_i^r$ that we started with. %This concludes the description of the construction of the clusters of $C$. 

To conclude, given the clusters $A_i^\ell, A_i^r, B_i^\ell, B_i^r$, we have shown how to compute all clusters $C_i^\ell, C_i^r$. Once we have all clusters of the SLP's start variable, we merge them into a single cluster (i.e. obtain the top-tree of the entire Cartesian tree of $S$) by merging all its $O(\sigma)$ clusters (introducing $O(\sigma)$ new clusters and increasing the height by $O(\log \sigma)$) similarly to the description above. This concludes the proof of Theorem~\ref{thm:tg}.

\subsection{A lower bound}\label{sec:compare lower}

We now prove Theorem~\ref{thm:lower bound}. That is, for every sufficiently large $\sigma$ and $s=\Omega(\sigma^{2})$
we will construct a string $S$ of integers in $[0,\sigma)$ that
can be described with an SLP $\mathcal{S}$ of size $s$, such that any top-tree compression $\mathcal{T}$ of the Cartesian
tree $\mathcal{C}$ of $S$ is of size $\Omega(s\cdot \sigma)$.

Let us first describe the high-level intuition. The shuffle of two strings $x[1..\ell]$ and $y[1..\ell]$ is defined as
$x[1]y[1]x[2]y[2]\ldots x[\ell]y[\ell]$. It is not very difficult to construct a small SLP describing a collection of many
strings $A_{i}$ and $B_{j}$ of length $\ell$,
and choose $s$ pairs $(i_{k},j_{k})$ such that every SLP describing all shuffles of
$A_{i_{k}}$ and $B_{j_{k}}$ contains $\Omega(s\cdot \ell)$ nonterminals. However, our goal is to show a lower bound
on the size of a top-tree compression of the Cartesian tree, not on the size of an SLP. This requires designing the
strings $A_{i}$ and $B_{j}$ so that a top-tree compression of the Cartesian tree of $A_{i}B_{j}$ roughly corresponds
to an SLP describing the shuffle of $A_{i}$ and $B_{j}$.

Let $\sigma'=\lfloor (\sigma-4)/2 \rfloor $ and $\ell$ be a parameter such that $2^{\ell}-1 \geq \sigma'$.
We start with constructing $2^{\ell}-1$ distinct auxiliary strings $X_{i}$ over $\{\sigma-2,\sigma-1\}$, each of the
same length $\ell$. We construct every such string except for $(\sigma-1)^{\ell}$, so that Cartesian trees corresponding
to $X_{i}$s are all distinct. The total number of $X_{i}$s is $2^{\ell}-1$ and there is an SLP $\mathcal{X}$ of size $O(2^{\ell})$ that
contains a nonterminal deriving every $X_{i}$.
Next, let $X_{i,j}$ denote the string $X_{i} (\sigma-3) X_{j}$. By construction, Cartesian trees corresponding to $X_{i,j}$s
are all distinct, and all $X_{i,j}$s are of the same length.

The second and the third step are symmetric. We construct strings $A_{i}$ of the form:
\[ X_{1,i} 2 X_{2,i} 4 \cdots (2\sigma'-4) X_{\sigma'-1,i} (2\sigma'-2) X_{\sigma',i} (2\sigma') \]
for every $i=1,2,\ldots,2^{\ell}-1$.
There are $2^{\ell}-1$ such strings $A_{i}$, and there is an SLP $\mathcal{A}$ of size $O((2^{\ell}-1) \cdot \sigma')$
that contains a nonterminal deriving every $A_{i}$.

Similarly, we construct strings $B_{j}$ of the form:
\[ (2\sigma'-1) X_{\sigma',j} (2\sigma'-3) \ldots 3 X_{2,j} 1 X_{1,j} .\]

Finally, we obtain $S$ by concatenating the strings $A_{i} B_{j} 0$ for all $i,j=1,2,\ldots,2^{\ell}-1$.
The total size of an SLP that
generates $S$ is $O(2^{\ell}\cdot \sigma'+4^{\ell})$. It remains to analyze the size of a top-tree compression $\mathcal{T}$ of
the Cartesian tree $\mathcal{C}$ of $S$.

We first need to understand the structure of $\mathcal{C}$. Because all strings $A_{i}B_{j}$ are separated by $0$s, the Cartesian tree of
$S$ consists of a right path of length $(2^{\ell}-1)^{2}$ and the Cartesian tree of $A_{i} B_{j}$ attached as the left subtree of the
$((i-1)(2^{\ell}-1)+j)$-th node of the path.
The Cartesian tree of a string $A_{i} B_{j}$ consists of a path of length $2\sigma'$ starting at the root
and consisting of nodes $u_{1} - v_{1} - u_{2} - v_{2} - \ldots - u_{\sigma'} - v_{\sigma'}$ such that $v_{i}$ is
the left child of $u_{i}$ and $u_{i+1}$ is the right child of $v_{i}$. For every $k\in\{1,\ldots,\sigma'\}$,
the right subtree of $u_{k}$ is the Cartesian tree of $X_{k,i}$
and the left subtree of $v_{k}$ is the Cartesian tree of $X_{k,j}$.
See Figure~\ref{fig:bc}.

\begin{figure}[htb]
\centering
\ifarxiv \includegraphics[width=0.6\textwidth]{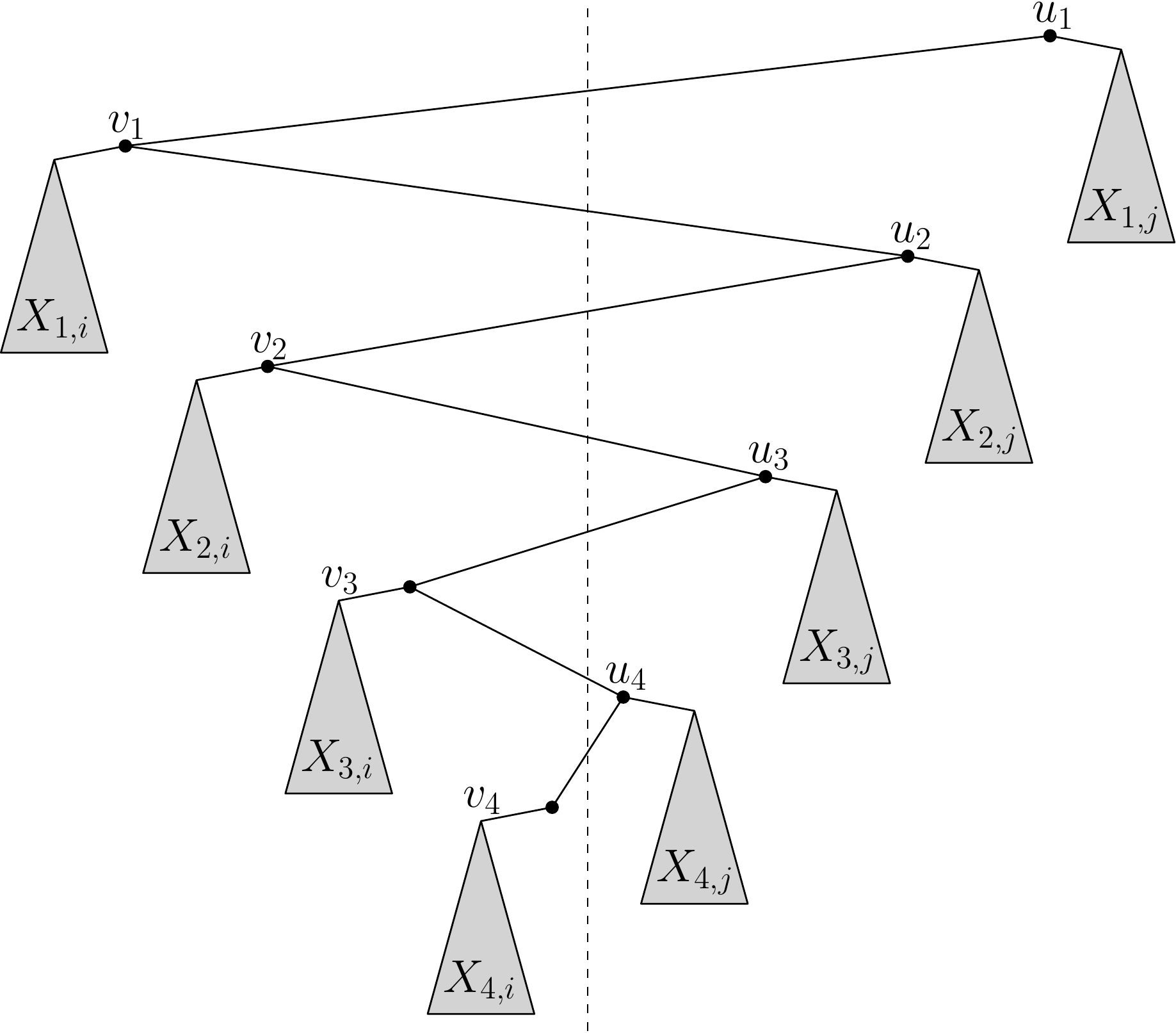}
\else \includegraphics[width=0.7\textwidth]{bc}
\fi
\caption{Structure of the Cartesian tree of $A_{i}B_{j}$ for $\sigma'=4$.}
\label{fig:bc}
\end{figure}

We define a {\it zigzag} to be an edge $u-v$
such that $v$ is the left child of $u$. Furthermore, for some
$k\in\{1,\ldots,\sigma'\}$ and $i,j\in\{1,\ldots,2^{\ell}-1\}$,
the right subtree of $u$ should be the
Cartesian tree of $X_{k,j}$, while the left subtree of $v$ should be the Cartesian tree
of $X_{k,i}$.

\begin{proposition}
\label{prop:many}
The Cartesian tree of $A_{i}B_{j}$ contains $\sigma'$ distinct zigzags. Furthermore, a zigzag contained in the Cartesian tree of $A_{i}B_{j}$ is not contained in the Cartesian tree of $A_{i'}B_{j'}$ for any $i'\neq i$ or $j'\neq j$.
\end{proposition}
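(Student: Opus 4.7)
My plan is to exhibit $\sigma'$ candidate zigzags lying on the spine of the Cartesian tree of $A_iB_j$ and then exploit the fact that, by construction, the Cartesian trees of the auxiliary strings $X_{a,b}$ are pairwise non-isomorphic to establish both within-tree distinctness and across-tree uniqueness.

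Using the structural description of the Cartesian tree of $A_iB_j$ given just before the proposition, I would take the $\sigma'$ spine edges $u_k - v_k$ for $k=1,\dots,\sigma'$ as zigzag candidates. Each such edge has $v_k$ a left child of $u_k$; the right subtree of $u_k$ is the Cartesian tree of $X_{k,i}$ and the left subtree of $v_k$ is the Cartesian tree of $X_{k,j}$, so the zigzag definition is satisfied (with the triple $(k,j,i)$ in the definition's notation). Because by construction the Cartesian trees of the $X_{a,b}$ are pairwise non-isomorphic, the right subtrees at $u_1,\dots,u_{\sigma'}$ are pairwise non-isomorphic (they come from $X_{1,i},\dots,X_{\sigma',i}$, whose first $X$-blocks differ), so these $\sigma'$ zigzags are pairwise distinct.

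For the second assertion I would argue the contrapositive. Suppose a zigzag appears both in the Cartesian tree of $A_iB_j$ and in the Cartesian tree of $A_{i'}B_{j'}$. Within each copy the zigzag must be a spine edge $u_k - v_k$: classifying all left-child edges of $\mathcal{C}$, besides those spine edges one only finds (a) left-child edges from a spine node $v_k$ into the root of its hanging Cartesian tree of $X_{k,j}$, (b) left-child edges from a node on the outer right path of $\mathcal{C}$ into the root of a hanging $A_?B_?$ copy, and (c) left-child edges strictly inside some Cartesian tree of $X_{k,\cdot}$. Cases (a) and (b) fail the zigzag condition because the right subtree of the parent is much larger than $|X_{a,b}|=2\ell+1$, while case (c) fails because the containing Cartesian tree of $X_{k,\cdot}$ has only $2\ell+1$ nodes in total, far too few to carry a Cartesian tree of an $X_{a,b}$ on each side. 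Hence the zigzag is $u_k - v_k$ in the first copy and $u_{k'} - v_{k'}$ in the second; equating right subtrees yields that the Cartesian trees of $X_{k,i}$ and of $X_{k',i'}$ coincide, and equating left subtrees yields the same for $X_{k,j}$ and $X_{k',j'}$. Pairwise distinctness of the $X_{a,b}$ Cartesian trees then forces $(k,i)=(k',i')$ and $(k,j)=(k',j')$, hence $(i,j)=(i',j')$.

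The main obstacle is the localization argument in the previous paragraph, namely verifying by cases that no edge outside the $u_k - v_k$ spine edges accidentally satisfies the zigzag definition. I expect this to reduce to a short size bookkeeping based on $|X_{a,b}|=2\ell+1$, after which the rest of the proof is immediate from the pairwise distinctness of the Cartesian trees of the $X_{a,b}$'s guaranteed by the construction.
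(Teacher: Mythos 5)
Your proof is correct and matches the paper's intended argument: the paper states Proposition~\ref{prop:many} without an explicit proof, treating it as immediate from the structural description of the Cartesian tree of $A_iB_j$ (the spine $u_1-v_1-\cdots-u_{\sigma'}-v_{\sigma'}$ with the Cartesian trees of $X_{k,\cdot}$ hanging off it) together with the pairwise distinctness and equal length of the Cartesian trees of the $X_{a,b}$, which is precisely what your argument spells out, including the localization step showing that only spine edges can be zigzags. The only tiny imprecision is that in your cases (a) and (b) the relevant right subtree can also be \emph{empty} (at the bottommost spine node $v_{\sigma'}$ and at the last node of the outer right path) rather than ``much larger than $2\ell+1$''; an empty subtree still fails the zigzag condition, so the conclusion is unaffected.
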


\begin{lemma}
\label{lem:generate}
If $\mathcal{T}$ is a top-tree compression of a tree $T$ with $x$ distinct zigzags then $|\mathcal{T}|  = \Omega(x)$.
\end{lemma}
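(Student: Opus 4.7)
The plan is to define a map from distinct zigzag shapes in $T$ to cluster shapes (DAG nodes) in $\mathcal{T}$ and argue it has $O(1)$ multiplicity, giving $|\mathcal{T}| = \Omega(x)$. I would fix an instance of each of the $x$ distinct zigzag shapes. For a zigzag at edge $e = u{-}v$, write $w_1, w_2$ for the roots of $v$'s left subtree and $u$'s right subtree, so that the zigzag region $Z$ consists of $u$, $v$, the full subtrees rooted at $w_1, w_2$, and the three connecting edges $u{-}v$, $v{-}w_1$, $u{-}w_2$. I would follow the root-directed path in the top-tree starting at the leaf cluster $L_e$ for the edge $e$; along this path the associated cluster grows monotonically. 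Define $C(e)$ to be the lowest cluster on this path that contains all of $Z$; such a cluster exists because the root cluster equals $T$.

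Next, I would argue that each DAG node in $\mathcal{T}$ can serve as $C(e)$ for only $O(1)$ distinct zigzag shapes. A DAG node encodes the full hierarchical decomposition of its cluster into sub-clusters, so every edge of $T$ inside the cluster corresponds to a specific leaf of the DAG-subtree rooted at that node. The leaf $L_e$ sits at one such position, and the zigzag region $Z$ is canonically determined from $e$ inside $C(e)$ as the union of $e$ with the full subtrees on either side of it in the underlying subgraph. Thus if two instances $e_1, e_2$ satisfy $C(e_1) = C(e_2)$ in $\mathcal{T}$ and some isomorphism of the DAG-subtrees maps $L_{e_1}$ to $L_{e_2}$, then the zigzag regions $Z_1$ and $Z_2$ are isomorphic and the two zigzag shapes coincide.

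The hard part will be that a DAG node might have internal symmetries, so $C(e_1) = C(e_2)$ alone does not force $L_{e_1}$ and $L_{e_2}$ to correspond under such an isomorphism. To bound the multiplicity of the map, I would exploit the minimality of $C(e)$: the cluster immediately below $C(e)$ on the walk from $L_e$ did not yet contain all of $Z$, so the sibling cluster merged in at the final step must contribute at least one edge of $Z$. This pins the zigzag edge $L_e$ to within $O(1)$ candidate leaf positions near the interface between the two children of $C(e)$ in the top-tree. Consequently each DAG node in $\mathcal{T}$ can be the image of at most $O(1)$ distinct zigzag shapes, yielding $|\mathcal{T}| = \Omega(x)$ as required.
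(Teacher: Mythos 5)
Your overall strategy is the same as the paper's: map each distinct zigzag to the smallest cluster of the top-tree that contains its region $Z$ (which, since the clusters containing the edge $e$ form the root-directed path from the leaf $L_e$, is exactly your $C(e)$), and then argue that each DAG node is the image of $O(1)$ distinct zigzag shapes. You also correctly isolate the hard part: the shape of $C(e)$ alone does not determine where inside it the zigzag sits. However, your resolution of that hard part does not work. Minimality of $C(e)$ only tells you that the sibling cluster $B$ merged at the last step contains \emph{some} edge of $Z$; since $A$ and $B$ intersect in a single boundary node $b$, and $Z$ is connected, this forces $b\in Z$ and $b$ to separate $Z\cap A$ from $Z\cap B$. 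But $b$ can be \emph{any} node of $Z$, including a node buried deep inside one of the two hanging subtrees (the Cartesian trees of $X_{k,i}$ and $X_{k,j}$, each of size $\Theta(\ell)$). In that case the zigzag edge $u{-}v$ lies $\Theta(\ell)$ levels above $b$, so it is not ``pinned to within $O(1)$ candidate leaf positions near the interface'': a priori there are $\Theta(\ell)$ possible heights at which $u$ could sit above $b$, and hence up to $\Theta(\ell)$ distinct zigzag shapes that could share the same DAG node. This both leaves the $O(1)$ multiplicity claim unproved and, if patched naively, would only yield $|\mathcal{T}|=\Omega(x/\ell)$, losing a $\log s$ factor in Theorem~\ref{thm:lower bound}.

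The missing idea is that the argument must exploit a property of the construction, not just of top-trees: all the strings $X_{k,x}$ have the same length, so all their Cartesian trees have the same size. The paper's proof splits into the horizontal case (where $b$ is the root of $A\cup B$ and hence is forced to equal $u$, giving one zigzag) and the vertical case (where $b$ is $u$, $v$, or an internal node of one of the two hanging subtrees). In the latter sub-cases one recovers $w_1$ or $w_2$ \emph{canonically} from $b$ by walking up from $b$ as long as the current subtree is smaller than the common size of the $X_{k,x}$ Cartesian trees; the parent of the node reached is $u$ or $v$. This makes the position of the zigzag inside the cluster a function of the cluster's shape plus an $O(1)$ choice of case, which is exactly what your symmetry worry requires and what your interface argument fails to deliver. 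Without invoking the equal-size property (or some substitute canonical identification), the proof is incomplete.
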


\begin{proof}
We associate each distinct zigzag with a smallest cluster of the top-tree of $T$ that contains it. We claim that each
cluster obtained by merging clusters $A$ and $B$ is associated with $O(1)$ zigzags.
Since the size of $\mathcal T$ equals the number of distinct clusters in the top-tree of $T$, the lemma follows.
Consider a zigzag $z = u - v$ associated with $A\cup B$. Hence, $z$ is not contained in $A$ nor in $B$. We consider two cases.
\begin{enumerate}
\item $A$ and $B$ are merged horizontally. Then $A$ and $B$ share the top boundary node $b$, and in fact $u=b$.
It follows that $z$ is the only zigzag in $A\cup B$ that is not in $A$ nor in $B$.
\item $A$ and $B$ are merged vertically. Then the top boundary node of $A$ is the bottom boundary node $b$ of $B$.
Then either $b=u$, $b=v$, or $b$ is a node of the Cartesian tree of some
$X_{k,x}$ attached as the right subtree of $u$ or the left subtree of $v$.
Each of the first two possibilities gives us one zigzag associated with $A\cup B$ that is not in $A$ nor in $B$.
In the remaining two possibilities (i.e. when the Cartesian tree of $X_{k,x}$ is attached as the right subtree of $u$ or as the left subtree of $v$), because the size of the Cartesian tree of every $X_{k,x}$
is the same, we can determine $u$ or $v$, respectively, by navigating up from $b$ as long as the size of
the current subtree is too small, and proceed as in the previous two cases.\qed
\end{enumerate}
\end{proof}

Combining Proposition~\ref{prop:many} and Lemma~\ref{lem:generate} we conclude that $\mathcal{T} = \Omega(4^{\ell} \cdot \sigma')$.
Recall that the size of an SLP that generates $S$ is $O(2^{\ell}\cdot \sigma'+4^{\ell})$, where
$\sigma'=\lfloor (\sigma-4)/2\rfloor = \Theta(\sigma)$ and $\ell$ is parameter such that $2^{\ell}-1 \geq \sigma'$.
Given a sufficiently large $\sigma$ and $s=\Omega(\sigma^{2})$, we first choose $\ell = \lceil 1/2\log s \rceil$.
Observe that then $2^{\ell}-1 \geq \sigma'$ indeed holds because of the assumption $s=\Omega(\sigma^{2})$.
We construct a string $S$ generated by an SLP of size $O(2^{\ell}\cdot \sigma'+4^{\ell})=O(s)$,
and any top-tree compression $\mathcal{T}$ of the Cartesian tree of $S$ has size $\Omega(s\cdot \sigma')$.
This concludes the proof of Theorem~\ref{thm:lower bound}.

%\section*{References}
\ifarxiv 

\else \bibliography{ref}
\fi

\end{document}